\tikzset{
  every picture/.style={>=stealth'},
  semithick,
  nolabel/.style={label/.code={}},
  graphs/declare={gap}{__{}[draw=none,fill=none] },
  graphs/declare={gp2}{__{}[draw=none,fill=none]  -!-  __{}
    [draw=none,fill=none] },
  graphs/declare={gp3}{__{}[draw=none,fill=none]  -!-  __{}
    [draw=none,fill=none] -!-  __{}[draw=none,fill=none] },
  smallgraphs/.style={%
    graphs/every graph/.style={
      branch right=4mm,
      grow down=4mm,
      empty nodes,
      edges={line width=0.2pt},
      nodes={circle,inner sep=1.2pt}
    },
    lab/.style={draw=none,fill=none,rectangle}
  },
  every graph/.style={branch right=5mm,grow down=5mm}
}
\newcommand{\alsolabel}[2][]{
  \foreach \pos/\where/\lab in {#2} {
    \scoped[label distance=1.2mm,#1,]%
    \node[rectangle,fill=none,draw=none,%
      label={[rectangle]\where:{$%
      \pgfkeysvalueof{/mnn/label prefix}%
      \lab
      \pgfkeysvalueof{/mnn/label suffix}%
      $}},at={(\pos)}] (\lab) {};%
  };%
}
\newcommand{\mathnamenodebasic}[2][]{%
  \node[circle,fill,inner sep=2pt,#1,
  ] (#2) {};%
  \@ifundefined{mnn@savelabel}{%
    \xdef\mnn@savelabel{#2/#1}%
  }{%
    \xdef\mnn@savelabel{\mnn@savelabel,#2/#1}%
  }%
}
\newcommand{\matrixgraphfinal}[1][]{%
  \foreach \lab/\opts in \mnn@savelabel {
    \scoped[label distance=1.2mm,#1,\opts]%
    \node[\opts,rectangle,fill=none,draw=none,%
      label={[rectangle]{$%
      \pgfkeysvalueof{/mnn/label prefix}%
      \lab
      \pgfkeysvalueof{/mnn/label suffix}%
      $}},at={(\lab)}] {};%
  };%
  \global\let\mnn@savelabel\relax
}
\newcommand{\mnn@exec}[1][]{\mathnamenodebasic[#1]{\mnn@name}}
\def\mnn@gathername#1{%
  \edef\mnn@name{\mnn@name#1}\mnn@work%
}
\def\mnn@work{%
  \@ifnextchar\pgfmatrixnextcell{%
    \mnn@exec
  }{%
  \@ifnextchar\pgfmatrixendrow{%
      \mnn@exec
  }{%
      \@ifnextchar[{%
      \mnn@exec
      }{%
      \mnn@gathername%
      }%
    }%
  }%
}
\newcommand{\mathnamenode}{\def\mnn@name{}\mnn@work}
\newcommand{\matrixgraph}{
  \begingroup
  \catcode`\&=13%
  \matrixgraph@work
}
\newcommand{\matrixgraph@work}[3][]{
  \matrix[fill=none,draw=none,rectangle,inner sep=0pt,%
    matrix anchor=north west,#1,%
    execute at begin cell=\mathnamenode] {#2};%
  \endgroup%
  \graph[use existing nodes,%
    default edge operator=complete bipartite]{#3};%
  \matrixgraphfinal[#1]%
}
\newcommand{\colclass}[2][0]{
\makeatletter
\xdef\clcl@verts{}
\foreach \v in {#2}{\xdef\clcl@verts{\clcl@verts(\v)}}
\begin{scope}[on background layer]
\node[fill=black!20,rotate fit=#1,fit/.expand once={\clcl@verts},inner sep=4pt,rectangle,rounded corners=8pt,draw=none] {};
\end{scope}
\makeatother
}
\theoremstyle{plain}
\newtheorem{theorem}{Theorem}[section]
\newtheorem{lemma}[theorem]{Lemma}
\theoremstyle{definition} 
\newtheorem{example}[theorem]{Example}
 \newtheorem{remark}[theorem]{Remark}
\def\@gifnextchar#1#2#3{\let\@tempe#1\def\@tempa{#2}\def\@tempb{#3}%
  \futurelet\@tempc\@gifnch}
\def\@gifnch{\ifx\@tempc\@sptoken\let\@tempd\@tempb%
  \else\ifx\@tempc\@tempe\let\@tempd\@tempa\else\let\@tempd\@tempb\fi\fi\@tempd}
\def\SK@set#1{\left\{#1\right\}}
\def\SK@@set#1#2{\{#1\,:\,
    \begin{array}{@{}l@{}}#2\end{array}
\}}
\def\SK@mset#1{\left\{\!\!\left\{#1\right\}\!\!\right\}}
\def\SK@@mset#1#2{\{\!\!\{#1\,:\,
    \begin{array}{@{}l@{}}#2\end{array}
\}\!\!\}}
\def\BIG@set#1{\Big\{#1\Big\}}
\def\BIG@@set#1#2{\Big\{#1\:\Big|\:
    \begin{array}{@{}l@{}}#2\end{array}
\Big\}}
\newcommand{\Set}[1]{\@gifnextchar\bgroup{\SK@@set{#1}}{\SK@set{#1}}}
\newcommand{\Mset}[1]{\@gifnextchar\bgroup{\SK@@mset{#1}}{\SK@mset{#1}}}
\newcommand{\Bigset}[1]{\@gifnextchar\bgroup{\BIG@@set{#1}}{\BIG@set{#1}}}
\newcommand{\function}[2]{:#1 \rightarrow #2}
\newcommand{\Zset}{\mathbb{Z}}
\newcommand{\WL}[1]{\ensuremath{#1\text{-}\mathrm{WL}}\xspace}
\newcommand{\kwl}{\WL{k}}
\newcommand{\tw}[1]{\mathit{tw}(#1)}
\newcommand{\htw}[1]{\mathit{htw}(#1)}
\newcommand{\sub}[2]{\mathrm{sub}(#1,#2)}
\newcommand{\suub}[2]{\mathrm{sub}(#1;#2)}
\newcommand{\hhomo}[2]{\mathrm{hom}(#1,#2)}
\newcommand{\hhhomo}[2]{\mathrm{hom}(#1;#2)}
\newcommand{\inhom}[2]{\mathrm{hom}^*(#1,#2)}
\newcommand{\barz}{{\bar z}}
\newcommand{\barx}{{\bar x}}
\newcommand{\bary}{{\bar y}}
\newcommand{\alg}[4]{\mathrm{WL}_{#1}^{#2}(#3,#4)}
\newcommand{\algkstab}[2]{\mathrm{WL}_{k}(#1,#2)}
\newcommand{\algtwostab}[2]{\mathrm{WL}_{2}(#1,#2)}
\newcommand{\wlh}[2]{\mathrm{WL}_{#1,#2}}
\newcommand{\eqkwl}{\equiv_{\WL k}}
\newcommand{\clogic}{\mathsf{C}}
\newcommand{\cP}{\mathcal{P}}
\newcommand{\refeq}[1]{(\ref{eq:#1})}
\title{Local WL Invariance and\\ Hidden Shades of Regularity}
\author{Frank Fuhlbrück, Johannes Köbler, and Oleg Verbitsky%
\thanks{Supported by DFG grant KO 1053/8--1. On leave from the IAPMM, Lviv, Ukraine.}\\[4mm]
\normalsize
Institut für Informatik, Humboldt-Universität zu Berlin}
\date{}
\begin{document}

\maketitle

\begin{abstract}
The $k$-dimensional Weisfeiler-Leman algorithm is
a powerful tool in graph isomorphism testing.
For an input graph $G$, the algorithm determines a canonical coloring of 
$s$-tuples of vertices of $G$ for each $s$ between 1 and $k$.
We say that a numerical parameter of $s$-tuples is \kwl-invariant if it 
is determined by the tuple color. As an application of Dvořák's
result on \kwl-invariance of homomorphism counts, we spot 
some non-obvious regularity properties of strongly regular graphs
and related graph families. For example, if $G$ is a strongly regular graph,
then the number of paths of length 6 between vertices $x$ and $y$ in $G$
depends only on whether or not $x$ and $y$ are adjacent
(and the length 6 is here optimal). Or, the number of cycles
of length 7 passing through a vertex $x$ in $G$ is the same for every $x$
(where the length 7 is also optimal).
\end{abstract}

\section{Introduction}\label{s:intro}

The $k$-dimensional Weisfeiler-Leman algorithm (\WL k)
is a powerful combinatorial tool for detecting
non-isomorphism of two given graphs. Playing
a constantly significant role in isomorphism testing,
it was used, most prominently,  in Babai's quasipolynomial-time isomorphism algorithm~\cite{Babai16}.

For each $k$-tuple $\barx=(x_1,\ldots,x_k)$ of vertices in an input graph $G$,
the algorithm computes a canonical color $\algkstab G\barx$; see the details in Section \ref{s:prel}.
If the multisets of colors $\Mset{\algkstab G\barx}{\barx\in V(G)^k}$ and $\Mset{\algkstab H\bary}{\bary\in V(H)^k}$
are different for two graphs $G$ and $H$, then these graphs are clearly non-isomorphic,
and we say that \WL k \emph{distinguishes} them.
 If \WL k does not distinguish $G$ and $H$, we say that these graphs
are \emph{\WL k-equivalent} and write $G\eqkwl H$.
As proved by Cai, Fürer, and Immerman \cite{CaiFI92}, the \WL k-equivalence 
for any fixed dimension $k$ is strictly coarser than the isomorphism relation on graphs.
For $k=2$, an example of two non-isomorphic \WL2-equivalent graphs is provided by
any pair of non-isomorphic strongly regular graphs with the same parameters.
The smallest such pair consists of the Shrikhande and the $4\times4$ rook's graphs.
The Shrikhande graph, which will occur several times in the sequel,
is the Cayley graph of the abelian group $\Zset_4\times \Zset_4$
with connecting set $\{(1,0),(0,1),(1,1)\}$. 
Figure \ref{fig:shk} shows a natural drawing of this graph on the torus.

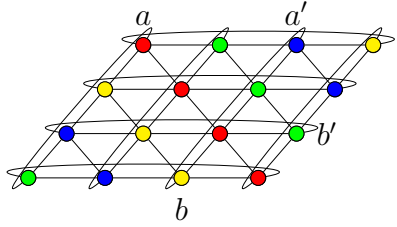
\begin{figure}[t]
\centering
\begin{tikzpicture}[
lab/.style={draw=none,fill=none,inner sep=0pt,rectangle},
every node/.style={circle,draw,inner sep=2pt},
line width=0.3pt,
wt/.style={fill=white},
vt/.style={line width=0.6pt}]
\matrixgraph[name=m1,nolabel]{
  &[3mm]&[3mm]&[3mm]&[3mm]&[3mm]&[3mm]&[3mm]&[3mm]&[3mm]\\
     &    &    & 00 &    & 01 &    & 02 &    & 03 &   \\[3.8mm]
     &    & 10 &    & 11 &    & 12 &    & 13 &    &   \\[3.8mm]
     & 20 &    & 21 &    & 22 &    & 23 &   \\[3.8mm]
  30 &    & 31 &    & 32 &    & 33 &    &   \\[3.8mm]
  }{
  00 -- 01 -- 02 -- 03 --[bend right=170] 00;
  10 -- 11 -- 12 -- 13 --[bend right=170] 10;
  20 -- 21 -- 22 -- 23 --[bend right=170] 20;
  30 -- 31 -- 32 -- 33 --[bend right=170] 30;
  00 -- 10 -- 20 -- 30 --[ bend left=170] 00;
  01 -- 11 -- 21 -- 31 --[ bend left=170] 01;
  02 -- 12 -- 22 -- 32 --[ bend left=170] 02;
  03 -- 13 -- 23 -- 33 --[ bend left=170] 03;
  00 -- 11 -- 22 -- 33;
  01 -- 12 -- 23;
  02 -- 13; 20 -- 31;
  10 -- 21 -- 32;
}

\path[every node/.style={circle,draw=black,fill=red,inner sep=2pt}] 
(00) node {} (11) node {} (22) node {} (33) node {};
\path[every node/.style={circle,draw=black,fill=green,inner sep=2pt}] 
(01) node {} (12) node {} (23) node {} (30) node {};
\path[every node/.style={circle,draw=black,fill=blue,inner sep=2pt}] 
(02) node {} (13) node {} (20) node {} (31) node {};
\path[every node/.style={circle,draw=black,fill=yellow,inner sep=2pt}] 
(03) node {} (10) node {} (21) node {} (32) node {};

\alsolabel{00/above/a,02/above/a',32/below/b,23/right/b'}

\end{tikzpicture}

\caption{The Shrikhande graph $S$ drawn on a torus; vertices of the same
  color form $4$-cycles (some  edges are not
  depicted). Though both $a$, $a'$ and $b$, $b'$ are non-adjacent, the
  common neighbors of $a$ and $a'$ are non-adjacent, while the
  common neighbors of $b$ and $b'$ are adjacent.  The automorphism
  group of $S$ acts transitively on the ordered pairs of non-adjacent
  vertices of each type~\cite{Sane15}. 
}
\label{fig:shk}
\end{figure}

Let $\hhomo FG$ denote the number of homomorphisms from a graph $F$ to a graph  $G$. 
A characterization of the \WL k-equivalence in terms of homomorphism numbers
by Dvořák \cite{Dvorak10} implies 
that the homomorphism count $\hhomo F\cdot$ is {\WL k}-invariant for each
pattern graph $F$ of treewidth at most $k$, that is, $\hhomo FG=\hhomo FH$ whenever $G\eqkwl H$. 

Let $\sub FG$ denote the number of subgraphs of $G$ isomorphic to $F$.
Lov{\'{a}}sz \cite[Section 5.2.3]{hombook} showed a close connection between 
the homomorphism and the subgraph counts, which found many applications in various context.
Curticapean, Dell, and Marx \cite{CurticapeanDM17} used this connection
to design efficient algorithms for counting the number of $F$-subgraphs in an input graph $G$.
In \cite{ArvindFKV20}, we addressed WL invariance of the subgraph counts.
Define the \emph{homomorphism-hereditary} treewidth $\htw F$ of a graph $F$ as
the maximum treewidth of the image of $F$ under an edge-surjective homomorphism.
Then Dvořák's invariance result for homomorphism counts,
combined with the Lov{\'{a}}sz relationship, implies that 
$\sub FG=\sub FH$ whenever $G\eqkwl H$ for $k=\htw F$. 

In fact, Dvořák \cite{Dvorak10} proves his result in a stronger, \emph{local} form.
To explain what we here mean by locality, we need some additional technical concepts.
A graph $F$ with $s$ designated vertices $z_1,\ldots,z_s$ is referred to as \emph{$s$-labeled}.
A tree decomposition and the treewidth of $(F,\barz)$ are defined as usually with the additional requirement
that at least one bag of the tree decomposition must contain all $z_1,\ldots,z_s$.\footnote{%
Imposing this condition is equivalent to the recursive definition given in~\cite{Dvorak10}.} 
A homomorphism from an $s$-labeled graph $(F,z_1,\ldots,z_s)$ to an $s$-labeled
graph $(G,x_1,\ldots,x_s)$ must take $z_i$ to $x_i$ for every $i\le s$.
Denote the number of such homomorphisms by $\hhhomo{F,\barz}{G,\barx}$.

The canonical coloring of the Cartesian power $V(G)^k$ produced by \WL k determines 
a canonical coloring of $V(G)^s$ for each $s$ between 1 and $k$.
Specifically, if $s<k$, we define $\algkstab G{x_1,\ldots,x_s}=\algkstab G{x_1,\ldots,x_s,\ldots,x_s}$
just by cloning the last vertex in the $s$-tuple $k-s$ times.
Dvořák \cite{Dvorak10} proves that, if an $s$-labeled graph $(F,\barz)$
has treewidth $k$, then even local homomorphism counts $\hhhomo{F,\barz}{\cdot}$
are \emph{\WL k-invariant} in the sense that $\hhhomo{F,\barz}{G,\barx}=\hhhomo{F,\barz}{H,\bary}$
whenever \kwl assigns the same color to the $s$-tuples $\barx$ and $\bary$, i.e.,
$\algkstab G{\barx}=\algkstab H{\bary}$.

Our first observation is that, like for ordinary unlabeled graphs,
this result can be extended to local subgraph counts.
Given a pattern graph $F$ with labeled vertices $z_1,\ldots,z_s$
and a host graph $G$ with labeled vertices $x_1,\ldots,x_s$, we write 
$\suub{F,\barz}{G,\barx}$ to denote the number of subgraphs $S$ of $G$ with $x_1,\ldots,x_s\in V(S)$ 
such that there is an isomorphism from $F$ to $S$ mapping $z_i$ to $x_i$ for all $i\le s$.
The local subgraph counts $\suub{F,z_1,\ldots,z_s}{\cdot}$ are \WL k-invariant
for $k=\htw{F,\barz}$, where the concept of the homomorphism-hereditary treewidth
is extended to $s$-labeled graphs in a straightforward way.
That is, not only the \WL k-equivalence
type of $G$ determines
the total number of $F$-subgraphs in $G$, but even the color $\algkstab G{x_1,\ldots,x_s}$
of each $s$-tuple of vertices determines the number of extensions of this particular tuple
to an $F$-subgraph (see Lemma~\ref{lem:main-local}).

Consider as an example the pattern graph $F=P_6$ where $P_6$ is a path through 6 vertices $z_1,\ldots,z_6$.
Consider also two host graphs $R$ and $S$ where
$R$ is the $4\times4$-rook's graph, and $S$ is the Shrikhande graph.
Since $\htw{P_6}=2$, the ``global'' invariance result in \cite{Dvorak10}
implies that $R$ and $S$ contain equally many 6-paths. Indeed,
$\sub{P_6}R=\sub{P_6}S=20448$. 

Moreover, we have $\htw{P_6,z_1,z_6}=2$; see Figure \ref{fig:p6homimg}a.
It follows that the count $\suub{P_6,z_1,z_6}{G,x,x'}$
is determined by $\algtwostab G{x,x'}$ for every graph $G$ and every pair of vertices
$x,x'$ in $G$.
If $G$ is a strongly regular graph, then
$\algtwostab G{x,x'}$ depends only on the parameters of $G$ and on whether $x$ and $x'$ are equal, adjacent, or
non-adjacent. Applied to $G\in\{R,S\}$, this justifies the fact that
$\suub{P_6,z_1,z_6}{R,x,x'}=\suub{P_6,z_1,z_6}{S,y,y'}=156$
for every pair of adjacent vertices $x,x'$ in $R$ and every pair of adjacent vertices $y,y'$ in $S$. 
If $x$ and $x'$ as well as $y$ and $y'$ are not adjacent, then
$\suub{P_6,z_1,z_6}{R,x,x'}=\suub{P_6,z_1,z_6}{S,y,y'}=180$. 

Note that the condition $\htw{P_6,z_1,z_6}=2$ is essential here.
Indeed, the slightly modified pattern $(P_6,z_2,z_5)$
does not enjoy anymore the above invariance property. For example,
for the two vertex pairs $a,a'$ and $b,b'$ in Figure \ref{fig:shk} we have
$\suub{P_6,z_2,z_5}{S,a,a'}=244$ while $\suub{P_6,z_2,z_5}{S,b,b'}=246$,
even though both pairs are non-adjacent.
The difference between the patterns $(P_6,z_1,z_6)$ and $(P_6,z_2,z_5)$,
is explained by the fact that $\htw{P_6,z_2,z_5}=3$; see Figure \ref{fig:p6homimg}b.

\begin{figure}[t]
\def\shiftall{\tikzset{xshift=9mm}}
\def\shiftsmall{\tikzset{xshift=3mm}}
\centering
\begin{tikzpicture}[
  smallgraphs,
  every node/.style={draw=gray,
  inner sep=2pt,fill=gray},
  wr/.style={fill=white,draw=black,rectangle,inner sep=1.5pt},
  gr/.style={black},
  dr/.style={rectangle,black,inner sep=1.5pt},
  bl/.style={bend left=45},
  br/.style={bend right=45},scale=1.3]
  \node[lab] {a)};
  \shiftsmall
  \graph[]{1[wr] -- 2 -- 3 ; 6[gr] -- 5 -- 4 -- 3};
  \shiftall
  \graph[empty nodes]{1[dr] -- 2 -- 3; gap -!- 5 -- 4 -- 3;5--1};
  \shiftall
  \graph[empty nodes]{1[wr] -- 2 -- 3; 6[gr] -!- gap -!- 4 -- 3;
    6 -- 1 -- 4};
  \shiftall
  \graph[empty nodes]{1[wr] -- 2 -- 3 --[bl] 1;
    6[gr] -- 5 -- 1};
  \shiftall
  \graph[empty nodes]{1[wr] -- 2; 6[gr] -- 5 -- 4 -- 1;};
  \shiftall
  \graph[empty nodes]{1[wr] -- 2 -- 3; 6[gr]--1;};
  \shiftall
  \graph[empty nodes]{1[wr] -- 2 -- 3--[bl] 1; 6[gr]--2;};
  \shiftall
  \graph[empty nodes]{1[wr] -- 2; 6[gr] -- 5 -- 2;};
  \shiftall
  \graph[empty nodes]{1[dr] -- 2 -- 3; gap -!- 5 -- 1 --[br] 3};
  \shiftall
  \graph[empty nodes]{1[dr] -- 2 -- 3; gap -!- gap -!- 4;
    3 -- 4 -- 2};
  \shiftall
  \graph[empty nodes]{1[dr] -- 2 -- 3 --[bl] 1;};
  \tikzset{yshift=-14mm,xshift=-90mm}
  \graph[empty nodes]{1[dr] -- 2 -- 3; gap -!- 5;
    3 -- 2 -- 5 -- 1};
  \shiftall
  \graph[empty nodes]{1[wr] -- 2; 6[gr] -!- gap -!- 4 -- 1 -- 6};
  \shiftall
  \graph[empty nodes]{1[wr] -- 2 -- 3; 6[gr] -- 1; 6 -- 3};
  \shiftall
  \graph[empty nodes]{1[wr] -- 2; 2 -- 6[gr] -- 1 -- 4 -- 6};
  \shiftall
  \graph[empty nodes]{1[wr] -- 2; 2 -- 6[gr] -- 1 -- 2 -- 6};
  \shiftall
  \graph[empty nodes]{1[wr] -- 2; 6[gr] -- 1};
  \shiftall
  \graph[empty nodes]{1[wr]; 6[gr] -- 1};
  \shiftall
  \graph[]{1[wr] -- 2 -- 3 ; 6[gr] -!- gap -!- 4 -- 3;
    6-- 2 -- 4;};
  \shiftall
  \graph[]{1[wr] -- 2 -- 3 ; 6[gr] -- 5 -- 2;};
  \shiftall
  \graph[]{1[wr] -- 2 -- 3 ; 6[gr] -- 3 -- 2;};
  \shiftall\shiftall
  \tikzset{yshift=14mm}
  \node[lab] {b)};
  \shiftsmall
  \graph[]{1 -- 2[wr] -- 3 ; 6 -- 5[gr] -- 4 -- 3;};
  \tikzset{yshift=-14mm}
  \graph[]{gap -!- 2[wr] -- 3 ; gap -!- 5[gr] -- 4 -- 3;
    2 -- 4; 3 -- 5};
\end{tikzpicture}
\caption{
\label{fig:p6homimg}
(a) Homomorphic images of $(P_6,z_1,z_6)$ up to isomorphism and root swapping.
(b) $(P_6,z_2,z_5)$ and its image under the homomorphism $h$ which maps
$z_1$ to $z_4$, $z_6$ to $z_3$, and fixes all other vertices.
Since $h(z_2)$ and $h(z_5)$ must be in one bag, the homomorphism-hereditary treewidth increases to~$3$.
}
\end{figure}
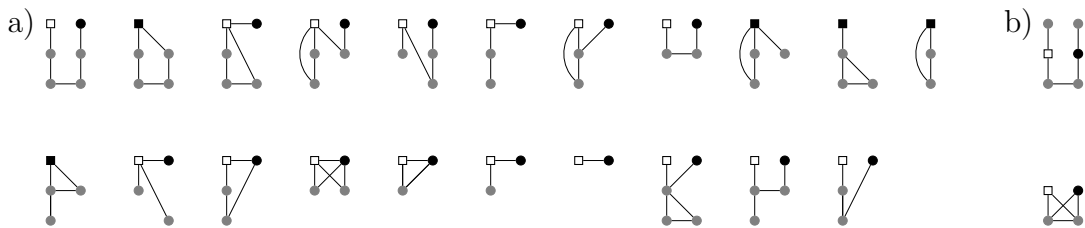

As we have seen, the number of 6-paths between two vertices is the same for any two adjacent (resp., non-adjacent)
vertices in $R$ and in $S$. The same holds true also for 7-paths and for any strongly regular graph.
This general fact follows from the \WL2-invariance of the subgraph counts for $P_s$
with labeled end vertices for $s\le7$ and from the fact that, if $G$ is a strongly regular graph,
then the color $\algkstab G{x,x'}$ is the same for all adjacent and all non-adjacent pairs $x,x'$. 
Our main purpose in this note is to collect such non-obvious regularity properties,
which are not directly implied by the definition of a strongly regular graph; see Theorem \ref{cor:srgs}.
As a further example of implicit regularity, the number of 7-cycles containing
a specified edge in a strongly regular graph does not depend on the choice of this edge.
This is actually true for a larger class of graphs, namely for constituent graphs
of association schemes, in particular, for all distance-regular graphs; see Theorem \ref{cor:const}.
Also, the number of 7-cycles passing through a specified vertex in 
a strongly regular graph does not depend on the choice of this vertex.
The last property also holds true for constituent graphs of association schemes 
and even for a yet larger class of \emph{$\wlh 12$-regular graphs} \cite{WL-recogn}; see Theorem~\ref{cor:hom12}. 

\paragraph{Related work.}
Godsil \cite{Godsil81} proves that, if $G$ is a constituent graph of an association scheme,
then the number of spanning trees containing a specified edge of $G$
is the same for every choice of the edge.
Note that this ``local'' result also has a ``global'' invariance analog.
If two graphs $G$ and $H$ are \WL2-equivalent, then they have the same
number of spanning trees. This follows from Kirchhoff's theorem because
\WL2-equivalent graphs are cospectral~\cite{Fuerer10}.

\section{Formal definitions}\label{s:prel}

\paragraph{Graph-theoretic preliminaries.}
The vertex set of a graph $G$ is denoted by $V(G)$.
A graph $G$ is \emph{$s$-regular} if every vertex of $G$ has exactly $s$ neighbors.
An $n$-vertex $s$-regular graph $G$ is \emph{strongly regular} with parameters
$(n,s,\lambda,\mu)$ if every two adjacent vertices of $G$ have $\lambda$
common neighbors and every two non-adjacent vertices have $\mu$
common neighbors. The Shrikhande and the $4\times4$ rook's graphs,
which were mentioned in Section \ref{s:intro}, have parameters $(16,6,2,2)$.

A \emph{tree decomposition} of a graph $G$ is a tree $T$ and a family
$\mathcal{B}=\{B_i\}_{i\in V(T)}$ of sets $B_i\subseteq V(G)$, called \emph{bags},
such that the union of all bags covers all $V(G)$, every edge of $G$
is contained in at least one bag, and we have $B_i\cap B_j\subseteq B_l$
whenever $l$ lies on the path from $i$ to $j$ in~$T$.
The \emph{width} of the decomposition is equal to $\max|B_i|-1$.
The \emph{treewidth} of $G$, denoted by $\tw G$, is the minimum width of a tree decomposition of~$G$.
Moreover, we define $\htw G$, the \emph{homomorphism-hereditary treewidth} of $G$, 
as the maximum $\tw {H}$ over all graphs $H$ such that there is an edge-surjective homomorphism from 
$G$ to $H$. We give a simple example for further reference.

\begin{example}\label{ex:htw-C7}
$\htw{C_s}=2$ for $3\le s\le7$, where $C_s$ denotes the cycle graph on $s$ vertices. 
Indeed, a simple inspection shows that, if $s\le7$,
then all edge-surjective homomorphic images of $C_s$ are outerplanar graphs
with one exception for the graph formed by three triangles sharing a common edge,
which is an image of $C_7$. Another argument, based on a characterization of
the class of all graphs with homomorphism-hereditary treewidth at most 2, can be found in~\cite{ArvindFKV20}.
Since this class is closed under taking subgraphs, we also have $\htw{P_s}=2$ for $s\le7$. 
\end{example}

\paragraph{The Weisfeiler-Leman algorithm.}
The original version of the algorithm, \WL2, was described by Weisfeiler and Leman in
\cite{WLe68}. For an input graph $G$,
this algorithm assigns a color to each pair of vertices $(x,y)$
and then refines the coloring of the Cartesian square $V(G)^2$ step by step. 
Initially, $\alg 20G{x,y}$ is one of three colors depending on whether $x$ and $y$
are adjacent, non-adjacent, or equal. The coloring after the $i$-th refinement step,
$\alg 2iG{x,y}$, is computed as
\begin{equation}
  \label{eq:refine-2}
  \alg 2iG{x,y}=(\alg 2{i-1}G{x,y},\Mset{(\alg 2{i-1}G{x,z},\alg 2{i-1}G{z,y})}_{z\in V(G)}),
\end{equation}
where $\Mset{}$ denotes the multiset. In words, \WL2 traces through all extensions
of each pair $(x,y)$ to a triangle $(x,z,y)$, classifies each $(x,z,y)$
according to the current colors of its sides $(x,z)$ and $(z,y)$,
and assigns a new color to each $(x,y)$ so that any two $(x,y)$ and $(x',y')$ become differently colored
if they have differently many extensions of some type.

Let $\cP^i$ denote the color partition of $V(G)^2$ after the $i$-th refinement.
Since $\cP^i$ is finer than or equal to $\cP^{i-1}$, the color partition stabilizes starting from some step
$t\le n^2$, where $n=|V(G)|$, that is, $\cP^{t+1}=\cP^t$ and, hence, $\cP^i=\cP^t$ for all $i\ge t$. 
The algorithm terminates as soon as the stabilization is reached.
We denote the final color of a vertex pair $(x,y)$ by $\algtwostab G{x,y}$.
The following simple observation plays an important role below.

\begin{remark}\label{rem:srg}
The color partition is stable from the very beginning, that is, 
$\cP^1=\cP^0$ and $\algtwostab G{x,y}=\alg 20G{x,y}$,
whenever $G$ is strongly regular, and only in this case.  
\end{remark}

The $k$-dimensional version of the algorithm, \WL k, operates similarly with $k$-tuples of vertices
and computes a stable color $\algkstab G{x_1,\dots,x_k}$ for each $k$-tuple $x_1,\dots,x_k$ of vertices of $G$.
This coloring extends to $s$-tuples $\barx=(x_1,\ldots,x_s)$ for $s<k$
and $s=k+1$ as follows. If $s<k$, we set $\algkstab G\barx=\algkstab G{x_1,\dots,x_s,\dots,x_s}$, 
i.e, extend $\barx$ to a $k$-tuple by cloning the last entry.
For $\barx=(x_1,\ldots,x_k,x_{k+1})$, we define
$\algkstab G\barx=(\algkstab G{\barx_{-1}},\ldots,\algkstab G{\barx_{-(k+1)}}$,
where $\barx_{-i}$ is obtained from $\barx$ by removing the entry $x_i$.
We do not describe \kwl for $k\ge3$ in more detail, as
the following logical characterization is sufficient for our purposes.

\paragraph{Logic with counting quantifiers.}
Cai, Fürer, and Immerman \cite{CaiFI92} established a close connection between
the Weisfeiler-Leman algorithm and first-order logic with counting quantifiers.
A \emph{counting quantifier} $\exists^m$ opens a logical formula saying that
a graph contains at least $m$ vertices with some property.
Let $\clogic^k$ denote the set of formulas in the standard first-order logical formalism
which can additionally contain counting quantifiers and use
occurrences of at most $k$ first-order variables. 

Let $G$ be a graph with $s\le k$ labeled vertices $(x_1,\ldots,x_s)$.
The \emph{$\clogic^{k}$-type} of $(G,x_1,\ldots,x_s)$ is the set of all formulas with $s$ free variables
in $\clogic^{k}$ that are true on $(G,x_1,\ldots,x_s)$.
Despite a rather abstract definition, the $\clogic^{k}$-types admit an efficient encoding
by the colors produced by the $(k-1)$-dimensional Weisfeiler-Leman algorithm.

\begin{lemma}[Cai, Fürer, and Immerman \cite{CaiFI92}]\label{lem:CFI}
Let $k\ge1$ and $0\le s\le k+1$. Then
$(G,x_1,\ldots,x_s)$ and $(H,y_1,\ldots,y_s)$ are of the same $\clogic^{k+1}$-type if and only if
$\algkstab G{x_1,\dots,x_s}=\algkstab H{y_1,\ldots,y_s}$.
\end{lemma}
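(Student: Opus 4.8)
The plan is to route the proof through the \emph{bijective $(k+1)$-pebble game}, which is the common refinement of the two sides. In this game on graphs $G$ and $H$ there are $k+1$ pebble pairs; a \emph{position} is a pair of tuples $(\bar u,\bar v)$ of equal length at most $k+1$; in a round, Spoiler removes one pebble pair if all $k+1$ are on the board, Duplicator answers with a bijection $f\colon V(G)\to V(H)$ (losing at once if $|V(G)|\neq|V(H)|$), Spoiler places the free pair on some $w$ and on $f(w)$, and Duplicator loses as soon as the marked tuples fail to induce a partial isomorphism; Duplicator wins if she can play forever. The proof splits into two equivalences: \textbf{(A)} Duplicator wins from $(\barx,\bary)$ iff $(G,\barx)$ and $(H,\bary)$ have the same $\clogic^{k+1}$-type, and \textbf{(B)} Duplicator wins from $(\barx,\bary)$ iff $\algkstab G\barx=\algkstab H\bary$.

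Part (A) is the standard Ehrenfeucht--Fra\"iss\'e argument for counting logic: by induction on quantifier rank one matches each counting quantifier $\exists^{m}$ with one round of the game, Spoiler's obligation to announce a bijection being exactly the combinatorial content of ``preserving the number of witnesses''. A distinguishing $\clogic^{k+1}$-formula of quantifier rank $r$ yields a Spoiler strategy winning in $r$ rounds; conversely, a Duplicator strategy surviving $r$ rounds witnesses agreement of all quantifier-rank-$r$ types. Reuse of a variable corresponds to Spoiler first removing the matching pebble pair, which is why $k+1$ variables pair up with $k+1$ pebble pairs. Since $\clogic^{k+1}$ puts no bound on quantifier rank, one also notes --- a routine argument on the finite set of positions --- that Duplicator wins the unbounded game iff she survives every bounded one.

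Part (B) is the crux. I would first prove the bounded version by induction on $i$: Duplicator survives $i$ further rounds from $(\barx,\bary)$ iff $\alg kiG\barx=\alg kiH\bary$, where the round-$i$ coloring is extended to tuples of length below $k$ by cloning the last entry and to tuples of length $k+1$ by the one-entry-deletion rule, exactly as the paper extends the stable coloring. The base case $i=0$ is the assertion that $(\barx,\bary)$ induces a partial isomorphism iff the initial colors agree, which holds because $\alg k0G\cdot$ records precisely the ordered atomic type of a tuple. For the step, unfold the refinement rule generalizing~\refeq{refine-2}: $\alg k{i+1}G\barx$ records, for each coordinate $j\le k$, the multiset of colors $\alg kiG{\barx[j/w]}$ taken over $w\in V(G)$. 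If these multisets all match their $H$-counterparts and the round-$i$ colors of $\barx$ and $\bary$ already agree, then whatever pebble pair Spoiler removes, Duplicator can choose a color-preserving bijection and the induction hypothesis closes the round; a discrepancy in one such multiset conversely gives Spoiler a winning move. The unbounded statement follows by stabilization: ``Duplicator wins forever'' holds iff she survives every bounded game, iff $\alg kiG\barx=\alg kiH\bary$ for every $i$, iff $\algkstab G\barx=\algkstab H\bary$, the last step using that for large $t$ one has $\algkstab G\barx=\alg ktG\barx$ (and likewise in $H$) and that the round colorings only refine. Chaining (A) and (B) proves the lemma for $s=k$, and the remaining cases $0\le s<k$ and $s=k+1$ follow from the stated definitions of $\algkstab G\barx$ for such $s$ together with the corresponding easy facts about $\clogic^{k+1}$-types.

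I expect the main obstacle to be the induction in Part (B): one must faithfully match ``one WL refinement round'', which rewrites all $k$ coordinates at once by comparing a multiset of full color \emph{profiles}, to ``one round of the bijective game'', in which Spoiler disturbs a single pebble pair; the reconciliation is that this multiset both determines and is determined by the family of color-preserving bijections Duplicator may legally offer once Spoiler has selected a coordinate, and one must carry this through the length-$(k+1)$ case, where Spoiler's new pebble genuinely occupies the extra slot so that the recursion runs through colors of $(k+1)$-tuples rather than $k$-tuples. The limiting arguments on both sides are routine but must be written out, since neither $\clogic^{k+1}$ nor the unbounded game comes with an a priori round bound.
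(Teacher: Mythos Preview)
The paper does not prove this lemma; it is stated with attribution to Cai, F\"urer, and Immerman \cite{CaiFI92} and used as a black box. Your route through the bijective $(k+1)$-pebble game is exactly the standard proof of this equivalence (essentially the argument in \cite{CaiFI92} together with Hella's bijective-game formulation), and the outline you give is correct. One slip: in Part~(A) you write ``Spoiler's obligation to announce a bijection''; this is Duplicator's move.

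One genuine edge case deserves a sentence in your write-up. For $s=k+1$ you rely on the claim that the round-$0$ coloring of a $(k+1)$-tuple, defined via the $(k+1)$-tuple of colors of its one-entry deletions, recovers the full atomic type. This is true for $k\ge2$, since every pair of coordinates survives in some $k$-subtuple, but it fails for $k=1$, $s=2$: the pair $(\algonestab G{x_2},\algonestab G{x_1})$ does not record whether $x_1x_2$ is an edge (e.g.\ in $C_6$ all vertices get the same \WL1-color, yet adjacent and non-adjacent pairs have different $\clogic^2$-types). This is a quirk of the paper's extension convention rather than of your argument; in your proof you should either restrict the $s=k+1$ case to $k\ge2$, or note that the game characterization naturally adds the atomic-type check at the start, which for $k\ge2$ is already encoded in the deletion tuple.
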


For completeness, we state this result also for $k=1$, where \WL1 stands for
the classical \emph{degree refinement} routine~\cite{ImmermanL90}.

\section{Local WL invariance}
\label{sec:localWLcons}

An \emph{$s$-labeled graph} $F_\barz$ is a graph $F$ with a sequence of $s$ (not necessarily distinct)
designated vertices $\barz=(z_1,\ldots,z_s)$; cf.~\cite{LovaszS09}.
Sometimes we will also use more extensive notation $F_\barz=(F,z_1,\ldots,z_s)$,
as we already did in Sections \ref{s:intro}--\ref{s:prel}.
A homomorphism from an $s$-labeled graph $F_\barz$ to an $s$-labeled 
graph $G_\barx$ is a usual homomorphism from $F$ to $G$ taking $y_i$ to $x_i$ for every $i\le s$.

  A \emph{tree decomposition}  of an $s$-labeled graph 
  $F_\barz$ is a tree decomposition of $F$ where there is a bag 
  containing all of the labeled vertices $z_1,\ldots,z_s$.
The \emph{treewidth} $\tw {F_\barz}$ and the \emph{homomorphism-hereditary treewidth} $\htw {F_\barz}$
are defined in the same way as for unlabeled graphs.
Note that, if $F_\barz$ is a graph with $s$ pairwise distinct labeled vertices,
then $\htw {F_\barz}=k$ implies that $s\le k+1$.

\begin{remark}\label{rem:htw-rooted}
As it easily follows from the definition, for every 1-labeled graph $F_{z_1}$
we have $\tw {F_{z_1}}=\tw F$. Similarly, for a 2-labeled graph $F_{z_1,z_2}$
we have $\tw {F_{z_1,z_2}}=\tw F$ whenever the labeled vertices $z_1$ and $z_2$
are adjacent.
\end{remark}

For two $s$-labeled graphs, let $\hhomo{F_\barz}{G_\barx}$ denote the number
of homomorphisms from $F_\barz$ to $G_\barx$.
For each pattern graph $F_\barz$, the count $\sub{F_\barz}{G_\barx}$ is an invariant
of a host graph $G_\barx$. In general, a function $f$ of a labeled graph is
an \emph{invariant} if $f(G_\barx)=f(H_{\bary})$ whenever $G_\barx$ and $H_{\bary}$
are isomorphic (note that an isomorphism, like any homomorphism, must respect the labeled vertices).
We say that an invariant $f_1$ is \emph{determined} by an invariant $f_2$
if $f_2(G_\barx)=f_2(H_{\bary})$ implies $f_1(G_\barx)=f_1(H_{\bary})$.

\begin{lemma}[{Dvořák \cite[Lemma 4]{Dvorak10}}]\label{lem:dvorak}
For each $s$-labeled graph $F_\barz$, the homomorphism count
$\hhomo{F_\barz}{G_\barx}$ is determined by 
the $\clogic^{k+1}$-type of $(G,\barx)$, where $k=\tw{F_\barz}$.
\end{lemma}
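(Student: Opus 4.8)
The plan is to prove the statement by induction on the structure of a tree decomposition of $F_\barz$ of width $k=\tw{F_\barz}$, showing that at each stage the relevant partial homomorphism count is expressible by a formula in $\clogic^{k+1}$ with at most $s$ free variables, so that by Lemma~\ref{lem:CFI} it is determined by $\algkstab G\barx$. The key point is the standard observation that counting homomorphisms can be organized along a tree decomposition using dynamic programming: if a bag $B$ has size at most $k+1$, then a ``partial homomorphism'' defined on $B$ uses at most $k+1$ vertices of $G$, and the number of ways to extend it into a subtree can be computed by summing, over the vertices assigned to the child bag's fresh elements, products of smaller counts. Each such sum is a counting quantifier, and since $|B|\le k+1$ we never need more than $k+1$ variables simultaneously.

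First I would root the tree decomposition $T$ of $F_\barz$ at a bag $B_r$ that contains all of $z_1,\ldots,z_s$ (this bag exists by definition of a tree decomposition of a labeled graph). For each node $i$ of $T$ let $F_i$ denote the subgraph of $F$ induced on the union of bags in the subtree rooted at $i$, and for each map $\alpha$ from $B_i$ into $V(G)$ that is a homomorphism on the subgraph induced by $B_i$ and respects the labels in $B_i\cap\{z_1,\dots,z_s\}$, define $N_i(\alpha)$ to be the number of homomorphisms from $F_i$ to $G$ extending $\alpha$ and respecting all labels in $F_i$. The claim to prove by bottom-up induction is: for every node $i$ and every enumeration $u_1,\dots,u_t$ of $B_i$ (with $t\le k+1$), there is a $\clogic^{k+1}$-formula $\varphi_i(v_1,\dots,v_t)$ such that $\varphi_i$ is satisfied in $G$ by a tuple $(a_1,\dots,a_t)$ exactly when $\alpha\colon u_j\mapsto a_j$ is a valid partial homomorphism, and more strongly, $N_i(\alpha)$ is itself determined by the $\clogic^{k+1}$-type of $(G,a_1,\dots,a_t)$ — equivalently, by $\algkstab G{a_1,\dots,a_t}$ via Lemma~\ref{lem:CFI}. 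At a leaf, $N_i(\alpha)=1$ (or $0$), trivially definable. At an internal node with children $c_1,\dots,c_m$, one has, by the connectivity property of tree decompositions (shared variables between $B_i$ and the subtree below $c_j$ lie in $B_i\cap B_{c_j}$),
\begin{equation}\label{eq:dp}
N_i(\alpha)=\prod_{j=1}^{m}\ \sum_{\beta}\ N_{c_j}(\beta),
\end{equation}
where $\beta$ ranges over partial homomorphisms on $B_{c_j}$ that agree with $\alpha$ on $B_i\cap B_{c_j}$. The inner sum over $\beta$ ranges only over the values on $B_{c_j}\setminus B_i$; since $|B_{c_j}|\le k+1$, this sum can be written using at most $k+1$ variables total (those for $B_i\cap B_{c_j}$, held fixed and already named, plus the fresh ones), using one counting quantifier per fresh vertex, each ``counting'' step splitting the sum into finitely many classes according to the $\clogic^{k+1}$-type of the extended tuple, which by the induction hypothesis determines $N_{c_j}(\beta)$. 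Taking the product over $j$ and noting that at the root $B_r\supseteq\{z_1,\dots,z_s\}$ and $N_r$ on the labels equals exactly $\hhhomo{F,\barz}{G,\barx}$ (summing over the non-label entries of $B_r$ if any) completes the induction; by Lemma~\ref{lem:CFI} this is determined by $\algkstab G\barx$.

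The main obstacle, and the point that needs care rather than cleverness, is the bookkeeping in \eqref{eq:dp}: verifying that the dynamic-programming recurrence is correct — that every homomorphism from $F_i$ decomposes uniquely as a gluing of homomorphisms from the $F_{c_j}$ along the separators $B_i\cap B_{c_j}$, with no double counting and no missed edges — and, simultaneously, that the formula built this way genuinely stays within $k+1$ variables. The latter requires reusing variable names: once a bag is processed, its variables are recycled for the next child, which is legitimate precisely because the separator property guarantees that no vertex outside $B_i$ ever needs to be ``remembered'' across siblings. A secondary subtlety is handling labeled vertices that appear deep in the decomposition: the requirement that all of $\barz$ sit in a common bag is used exactly to ensure that the root formula has the correct $s$ free variables and that label constraints in lower bags are automatically consistent with $\alpha$. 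Once these points are dispatched, the conclusion follows by combining the variable bound with Lemma~\ref{lem:CFI}.
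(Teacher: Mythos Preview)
The paper does not supply its own proof of this lemma: it is quoted verbatim as Dvo\v{r}\'ak's result \cite[Lemma~4]{Dvorak10} and used as a black box to derive Lemma~\ref{lem:main-local}. So there is nothing in the paper to compare against.

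Your sketch is the standard dynamic-programming argument on a rooted tree decomposition and is, in outline, exactly how Dvo\v{r}\'ak proves it. The recurrence \eqref{eq:dp} is correct, and your identification of the two delicate points (correctness of the gluing along separators, and variable reuse to stay within $k+1$ names) is accurate. One place where your write-up is slightly loose: you conflate ``there is a single formula $\varphi_i$'' with ``$N_i(\alpha)$ is determined by the $\clogic^{k+1}$-type of the tuple''. The latter is the statement you actually need and actually prove; the inductive step works because the $\clogic^{k+1}$-type of $\alpha$ on $B_i$ determines, for each child $c_j$, the multiset of $\clogic^{k+1}$-types of the extensions $\beta$ to $B_{c_j}$ (this is precisely what counting quantifiers over the fresh variables give you), and by induction each such type fixes $N_{c_j}(\beta)$. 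That is enough to conclude, and no single finite formula is required. With that clarification the argument is complete and matches Dvo\v{r}\'ak's.
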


Given a pattern graph $F_\barz$ with $s$ pairwise distinct labeled vertices $\barz=(z_1,\ldots,z_s)$
and a host graph $G_\barx$ with $s$ pairwise distinct labeled vertices $\barx=(x_1,\ldots,x_s)$,
let $\sub{F_\barz}{G_\barx}$ denote the number of subgraphs $S$ of $G$ with $x_1,\ldots,x_s\in V(S)$ such that
there is an isomorphism from $F$ to $S$ mapping $z_i$ to $x_i$ for all $i\le s$.
Using the well-known inductive approach due to Lov{\'{a}}sz~\cite{Lovasz71}
to relate homomorphism and subgraph counts, from Lemma \ref{lem:dvorak}
we derive a related fact about local invariance of subgraph counts.

\begin{lemma}\label{lem:main-local}
For each $s$-labeled pattern graph $F_\barz$, the subgraph count
$\sub{F_\barz}{G_\barx}$ is determined by the $\clogic^{k+1}$-type of $(G,\barx)$, where $k=\htw{F_\bary}$.
\end{lemma}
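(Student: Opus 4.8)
The plan is to adapt the classical inclusion--exclusion relationship between homomorphism and subgraph counts, due to Lov\'{a}sz~\cite{Lovasz71} (see also~\cite[Section~5.2.3]{hombook} and the unlabeled special case in~\cite{ArvindFKV20}), to the $s$-labeled setting, and then to invoke Lemma~\ref{lem:dvorak}. Since the assertion speaks about host graphs $G_\barx$ with pairwise distinct labels $x_1,\ldots,x_s$, I restrict attention to such hosts throughout; then every homomorphism from $F_\barz$ into $G_\barx$ keeps $z_1,\ldots,z_s$ in distinct classes of its kernel partition. I expect the main obstacle to be the bookkeeping in the inversion step: one must pin down exactly which quotients of $F_\barz$ occur with a nonzero coefficient and, crucially, check that for each of those the natural quotient map is an \emph{edge-surjective} homomorphism of $s$-labeled graphs, so that the definition of the homomorphism-hereditary treewidth applies to it.

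\emph{Step 1 (Lov\'{a}sz inversion, labeled version).} For a partition $\theta$ of $V(F)$ that identifies no two adjacent vertices of $F$ and no two of $z_1,\ldots,z_s$, let $F/\theta$ be the simple quotient graph on the classes of $\theta$ (an edge joins two distinct classes whenever $F$ has an edge between them) and let $\barz'$ be the image of $\barz$; call such $\theta$ \emph{admissible}. The admissible partitions form a finite poset $\Theta$ under refinement whose least element is the discrete partition, and each $(F/\theta)_{\barz'}$ is a simple graph with $s$ pairwise distinct labels. Every label-preserving homomorphism $F_\barz\to G_\barx$ factors uniquely through its kernel partition, which is admissible since $G$ is loopless and the $x_i$ are distinct; classifying such homomorphisms by their kernel gives
\[
  \hhomo{F_\barz}{G_\barx}=\sum_{\theta\in\Theta}\mathrm{inj}\big((F/\theta)_{\barz'},G_\barx\big),
\]
where $\mathrm{inj}$ counts label-preserving injective homomorphisms, and the same identity with $F$ replaced by $F/\sigma$ reads $\hhomo{(F/\sigma)_{\barz'}}{G_\barx}=\sum_{\theta\ge\sigma,\,\theta\in\Theta}\mathrm{inj}((F/\theta)_{\barz'},G_\barx)$ for every $\sigma\in\Theta$. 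M\"obius inversion in the poset $\Theta$, applied at the discrete partition, therefore yields
\[
  \mathrm{inj}(F_\barz,G_\barx)=\sum_{\theta\in\Theta}c_\theta\,\hhomo{(F/\theta)_{\barz'}}{G_\barx},
\]
with integer coefficients $c_\theta$ (the values of the M\"obius function of $\Theta$) that depend only on $F_\barz$.

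\emph{Step 2 (from injective homomorphisms to subgraphs).} A subgraph $S$ of $G$ with $x_1,\ldots,x_s\in V(S)$ that admits a label-preserving isomorphism from $F$ is the image of exactly $a(F_\barz)$ label-preserving injective homomorphisms $F_\barz\to G_\barx$, where $a(F_\barz)\ge1$ is the number of automorphisms of $F$ fixing every $z_i$; hence $\sub{F_\barz}{G_\barx}=\mathrm{inj}(F_\barz,G_\barx)/a(F_\barz)$. Combined with Step~1,
\[
  \sub{F_\barz}{G_\barx}=\frac{1}{a(F_\barz)}\sum_{\theta\in\Theta}c_\theta\,\hhomo{(F/\theta)_{\barz'}}{G_\barx},
\]
a fixed rational combination of local homomorphism counts depending only on the pattern $F_\barz$.

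\emph{Step 3 (treewidth bound and conclusion).} For every admissible $\theta$ the quotient map $F_\barz\to(F/\theta)_{\barz'}$ is, by the construction of $F/\theta$, an edge-surjective homomorphism of $s$-labeled graphs, and $(F/\theta)_{\barz'}$ has pairwise distinct labels; hence $\tw{(F/\theta)_{\barz'}}\le\htw{F_\barz}=k$ directly by the definition of the homomorphism-hereditary treewidth for labeled graphs. By Lemma~\ref{lem:dvorak}, $\hhomo{(F/\theta)_{\barz'}}{G_\barx}$ is determined by the $\clogic^{\tw{(F/\theta)_{\barz'}}+1}$-type of $(G,\barx)$, and since $\clogic^{j}\subseteq\clogic^{k+1}$ whenever $j\le k+1$, it is in particular determined by the $\clogic^{k+1}$-type of $(G,\barx)$. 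As $\sub{F_\barz}{G_\barx}$ is the fixed combination displayed above, it too is determined by the $\clogic^{k+1}$-type of $(G,\barx)$, which proves the lemma.
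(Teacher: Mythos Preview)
Your proof is correct and follows essentially the same approach as the paper: both reduce $\sub{F_\barz}{G_\barx}$ to $\mathrm{inj}(F_\barz,G_\barx)/|\mathrm{Aut}(F_\barz)|$, relate injective homomorphism counts to homomorphism counts via Lov\'asz's partition identity, and then invoke Lemma~\ref{lem:dvorak} together with the observation that every quotient $(F/\theta)_{\barz'}$ is an edge-surjective image of $F_\barz$ and hence has treewidth at most $\htw{F_\barz}$. The only cosmetic difference is that the paper phrases the inversion as an induction on $|V(F)|$ (isolating $\mathrm{inj}(F_\barz,G_\barx)$ from $\hhomo{F_\barz}{G_\barx}=\mathrm{inj}(F_\barz,G_\barx)+\sum_\alpha\mathrm{inj}(F_\barz/\alpha,G_\barx)$ and applying the induction hypothesis to the smaller quotients), whereas you carry out the full M\"obius inversion in one step.
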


\begin{proof}
Let $\inhom{F_\barz}{G_\barx}$ denote the number of
\emph{injective} homomorphisms from $F_\barz$ to $G_\barx$.
Counting injective homomorphisms and subgraphs is essentially equivalent, since
\[
\sub{F_\barz}{G_\barx}=
\frac{\inhom{F_\barz}{G_\barx}}{|\mathrm{Aut}(F_\barz)|},
\]
where $\mathrm{Aut}(F_\barz)$ is the automorphism group of 
$F_\barz$ (as any homomorphism, automorphisms have to respect the labeled vertices).
Therefore, it is enough to prove that
the injective homomorphism count $\inhom{F_\barz}{G_\barx}$
is determined by the $\clogic^{k+1}$-type of $(G,\barx)$.

We use induction on the number of vertices in $F$.
The base case, namely $V(F)=\{z_1\}$, is obvious.
Suppose that $F$ has at least two vertices.

Call a partition $\alpha$ of $V(F)$ \emph{proper}
if every element of $\alpha$ is an independent set of vertices in $F$.
The partition of $V(F)$ into singletons is called \emph{discrete}.
A proper partition $\alpha$ determines the homomorphic image $F/\alpha$ of $F$
where $V(F/\alpha)=\alpha$ and two elements $A$ and $B$ of $\alpha$
are adjacent if between the vertex sets $A$ and $B$ there is at least one edge in $F$.
Let $\barz/\alpha=(A_1,\dots,A_s)$ where $A_i$ is the element of the partition $\alpha$ containing $z_i$.
The $s$-labeled graph $F_\barz/\alpha$ is defined by $F_\barz/\alpha=(F/\alpha)_{\barz/\alpha}$.
Note that
$$
\hhomo{F_\barz}{G_\barx}=
\inhom{F_\barz}{G_\barx}+
\sum_\alpha\inhom{F_\barz/\alpha}{G_\barx}  
$$
where the sum goes over the non-discrete proper partitions of $V(F)$.  
Note also that the number of labeled vertices in each $F_\barz/\alpha$ stays $s$, 
but some of them may become equal to each other.
In this case, we have $\inhom{F_\barz/\alpha}{G_\barx}=0$,
and all such $\alpha$ can be excluded from consideration.

Now, $\hhomo{F_\barz}{G_\barx}$ is determined by the $\clogic^{k+1}$-type of $(G,\barx)$
by Lemma \ref{lem:dvorak} because $\tw {F_\barz}\leq\htw {F_\barz}$.
Each $\inhom{F_\barz/\alpha}{G_\barx}$ is determined by the $\clogic^{k+1}$-type of $(G,\barx)$
by the inductive assumption. It follows that $\inhom{F_\barz}{G_\barx}$ is also determined.
\end{proof}

\section{Implicit regularity properties}

In what follows, by \emph{$s$-path} (resp.\ \emph{$s$-cycle}) we mean a path graph $P_s$
(resp.\ a cycle graph $C_s$) on $s$ vertices.

\begin{theorem}\label{cor:srgs}\hfill
  \begin{enumerate}[\bf 1.]
  \item For each $s\le7$, the number of $s$-paths between distinct
    vertices $x$ and $y$ in a strongly regular graph $G$ depends only
    on the adjacency of $x$ and $y$ (and on the parameters of~$G$).
\item For each $3\le s\le5$, the number of $s$-cycles containing two
  distinct vertices $x$ and $y$ of a strongly regular graph $G$
  depends only on the adjacency of $x$ and $y$ (and on the parameters
  of~$G$).
  \end{enumerate}
\end{theorem}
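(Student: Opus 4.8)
The plan is to obtain both parts from the local subgraph‑count invariance of Lemma~\ref{lem:main-local}, together with the fact (Remark~\ref{rem:srg}) that \WL2 never refines the initial coloring of vertex pairs in a strongly regular graph. Consequently, for such a $G$ the color $\algtwostab G{x,y}$ --- and hence, by Lemma~\ref{lem:CFI}, the $\clogic^3$-type of $(G,x,y)$ --- carries no information beyond whether $x$ and $y$ are equal, adjacent, or non-adjacent, so it suffices to show that the relevant subgraph counts are determined by this type.

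First I would record the elementary identity $\htw{F_{u,v}}=\htw{F^+}$, valid for every graph $F$ and pair of vertices $u,v$, where $F^+$ is obtained from $F$ by adding the edge $uv$ (if it is not already present): requiring some bag of a tree decomposition to contain both $u$ and $v$ is the same as adding $uv$, and this equivalence is inherited by edge‑surjective homomorphic images, since one can restrict a homomorphism of $F^+$ to $F$ and delete the image of $uv$ whenever that edge is not covered otherwise. Applying this with $F=P_s$ and $\{u,v\}=\{z_1,z_s\}$ yields $\htw{P_s,z_1,z_s}=\htw{C_s}=2$ for $3\le s\le7$ by Example~\ref{ex:htw-C7} (the path statement being trivial for $s\le2$). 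Applying it with $F=C_s$ and $\{u,v\}=\{z_1,z_{1+d}\}$ shows that $\htw{C_s,z_1,z_{1+d}}$ equals the homomorphism‑hereditary treewidth of $C_s$ with the chord $z_1z_{1+d}$ added; for $d=1$ that chord is already an edge, so the value is $\htw{C_s}=2$ again, while for $3\le s\le5$ the only remaining case is $s\in\{4,5\}$ with $d=2$, in which $C_s$ with the chord added is the diamond $K_4-e$ when $s=4$ and the house graph --- a triangle and a $4$-cycle sharing an edge --- when $s=5$. Both have homomorphism‑hereditary treewidth $2$: the only treewidth‑$3$ graph that could occur as an edge‑surjective image is $K_4$, which is ruled out by an edge count for the diamond and, for the house, by inspecting the four ways to identify a non‑adjacent pair of its vertices. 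This is precisely where $s=6$ fails: $C_6$ together with the chord joining a pair of opposite vertices does map edge‑surjectively onto $K_4$, so $\htw{C_6,z_1,z_4}=3$.

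With these bounds, Lemma~\ref{lem:main-local} gives that for $3\le s\le7$ the count $\sub{(P_s,z_1,z_s)}{(G,x,y)}$ --- which for distinct $x,y$ is exactly the number of $s$-paths with endpoints $x$ and $y$ --- is determined by the $\clogic^3$-type of $(G,x,y)$, and that for $3\le s\le5$ and each $d\in\{1,\ldots,\lfloor s/2\rfloor\}$ the count $\sub{(C_s,z_1,z_{1+d})}{(G,x,y)}$ is determined by the $\clogic^3$-type of $(G,x,y)$. Summing the latter over $d$ counts each $s$-cycle of $G$ through both $x$ and $y$ exactly once, sorted by the distance between $x$ and $y$ along it, so the number of such cycles is likewise determined by the $\clogic^3$-type of $(G,x,y)$. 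By Lemma~\ref{lem:CFI} this type is encoded by $\algtwostab G{x,y}$, and by Remark~\ref{rem:srg} this color agrees, for strongly regular $G$, with the initial color $\alg 20G{x,y}$, which depends only on whether $x=y$, $x\sim y$, or $x\not\sim y$; restricting to distinct $x,y$, both counts therefore depend only on the adjacency of $x$ and $y$. Finally, for any two strongly regular graphs $G$ and $H$ with the same parameters, \WL2 starts on each from the common $3$-class initial partition, which it never subdivides (Remark~\ref{rem:srg}) and on which it performs identical refinements, so an adjacent pair in $G$ receives the same stable color as an adjacent pair in $H$, and likewise for non‑adjacent pairs; by Lemmas~\ref{lem:CFI} and~\ref{lem:main-local} the corresponding counts coincide, which shows that both counts depend only on the adjacency and on the parameters of $G$.

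The main obstacle is the treewidth bookkeeping of the second paragraph: verifying that no edge‑surjective homomorphic image of $(P_s,z_1,z_s)$ for $s\le7$, or of $(C_s,z_1,z_{1+d})$ for $s\le5$, has treewidth $3$. The identity $\htw{F_{u,v}}=\htw{F^+}$ reduces this to a short finite list of small graphs, but each still has to be examined, and one also has to recognize that the method genuinely breaks at $s=6$ for cycles (via the $K_4$ image above). Everything past these treewidth bounds is a direct chain through Lemmas~\ref{lem:main-local} and~\ref{lem:CFI} and Remark~\ref{rem:srg}.
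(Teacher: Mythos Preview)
Your proposal is correct and follows the same overall strategy as the paper: bound the homomorphism-hereditary treewidth of the relevant labeled patterns by~$2$, invoke Lemma~\ref{lem:main-local} and Lemma~\ref{lem:CFI}, and finish with Remark~\ref{rem:srg}. The one noteworthy difference is in how the treewidth bounds are obtained. The paper handles $(P_s,z_1,z_s)$ by a case split on whether a homomorphism identifies $z_1$ and $z_s$ (reducing to $\htw{C_s,z_1,z_s}$ and $\htw{C_{s-1},z_1}$), and dispatches the labeled cycles $(C_s,z_1,z_i)$ for $s\le5$ by ``straightforward inspection''. You instead prove the general identity $\htw{F_{u,v}}=\htw{F^+}$ and reduce everything uniformly to unlabeled computations ($\htw{C_s}$, the diamond, the house), which is a cleaner packaging of the same finite check. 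Your explicit summation over~$d$ and your separate verification that two strongly regular graphs with the same parameters yield the same stable colors are points the paper leaves implicit; both are handled correctly.
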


\begin{proof}
Let $s\le7$. Suppose that the path $P_s$ goes through vertices $z_1,\ldots,z_s$ in this order
and, similarly, the cycle $C_s$ is formed by vertices $z_1,\ldots,z_s$ in this cyclic order.
  For cycles, $\htw{C_s,z_1,z_s}=\htw{C_s}=2$, where the first
  equality is due to Remark \ref{rem:htw-rooted} 
  and the second equality is noticed in
  Example \ref{ex:htw-C7}. 
Let $h$ be an edge-surjective homomorphism from a 2-labeled path $(P_s,z_1,z_s)$.
If $h(z_1)=h(z_s)$, then the image of $(P_s,z_1,z_s)$ under $h$ is also a homomorphic image of $(C_{s-1},z_1,z_1)$.
If $h(z_1)\ne h(z_s)$, then adding an edge between $h(z_1)$ and $h(z_s)$ does not increase
the treewidth of the 2-labeled image graph, which can then be seen as a homomorphic image of $(C_s,z_1,z_s)$.
Therefore, $\htw{P_s,z_1,z_s}\le\max(\htw{C_s,z_1,z_s},\htw{C_{s-1},z_1})$.
This implies $\htw{P_s,z_1,z_s}\le2$.  A
  straightforward inspection shows also that $\htw{C_s,z_1,z_i}=2$ for
  all $i\le s$ if $s\le5$.  
We conclude by Lemma \ref{lem:main-local} that $\suub{P_s,z_1,z_s}{G,x,y}$ for $s\le7$
and $\suub{C_s,z_1,z_i}{G,x,y}$ for $s\le5$ are determined by the $\clogic^3$-type of $(G,x,y)$
and hence, by Lemma \ref{lem:CFI}, by the canonical color $\algtwostab G{x,y}$.
As noticed in Remark \ref{rem:srg}, the color is determined by the adjacency of $x$ and~$y$.
\end{proof}

The next result applies to a larger class of graphs.
The concept of an \emph{association scheme} appeared in statistics (Bose \cite{BoseM59})
and, also known as a \emph{homogeneous coherent configuration} (Higman~\cite{Higman64}),
plays an important role in algebra and combinatorics \cite[Chapter 17]{CameronvL91}.
The standard definition and an overview of many important results in this area
can be found in \cite{Ponomarenko-book,Zieschang05}.
For our purposes, an association scheme can be seen as a complete directed graph
with colored edges such that
\begin{itemize}
\item 
all loops $(x,x)$ form a separate color class,
\item 
if two edges $(x,y)$ and $(x',y')$ are equally colored, then
their transposes $(y,x)$ and $(y',x')$ are equally colored too, and
\item 
the coloring is stable under the \WL2 refinement.
\end{itemize}
Each non-loop color class, seen as an undirected graph, forms
a \emph{constituent graph} of the association scheme;
this concept first appeared apparently in~\cite{Godsil81}.

A strongly regular graph and its complement can be seen as the two constituent
graphs of an association scheme. Moreover, all distance-regular graphs \cite{BrouwerCN89}
are constituent graphs (in fact, every connected strongly regular graph is distance-regular).

\begin{lemma}\label{lem:as}
  Let $\alpha\function{V^2}C$ be a coloring defining an association scheme,
and $G$ be a constituent graph of this scheme. Then
$\algtwostab G{x,y}=\algtwostab G{x',y'}$ whenever $\alpha(x,y)=\alpha(x',y')$.
\end{lemma}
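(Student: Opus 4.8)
The plan is to show that $\algtwostab G{x,y}$ depends only on $\alpha(x,y)$, and the natural route is to compare the \WL2 refinement applied to $G$ (as a two-colored complete digraph, where edges inside $G$ get one color, non-edges another, and loops a third) with the coloring $\alpha$ that already defines the association scheme. The key point is that $\alpha$ is, by the third bullet of the definition, \emph{stable} under the \WL2 refinement, and moreover $\alpha$ is \emph{finer than or equal to} the initial coloring $\alpha_G^{(0)} = \alg 20G{\cdot}$ of $G$: indeed, since $G$ is a constituent graph, the edge set $E(G)$ is a union of non-loop color classes of $\alpha$, so $\alpha(x,y)=\alpha(x',y')$ implies that $(x,y)$ and $(x',y')$ are both in $E(G)$, both non-edges, or both loops — i.e. $\alg 20G{x,y}=\alg 20G{x',y'}$.

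First I would record the general monotonicity fact about the \WL2 refinement operator: if a coloring $\beta$ refines a coloring $\gamma$, then one refinement step applied to $\beta$ refines one refinement step applied to $\gamma$; this is immediate from the definition \refeq{refine-2}, since the multiset $\Mset{(\beta(x,z),\beta(z,y))}_z$ determines the multiset $\Mset{(\gamma(x,z),\gamma(z,y))}_z$ obtained by pushing each pair forward along $\beta\to\gamma$. Iterating, the stable refinement $\algtwostab{}{\cdot}$ of $\beta$ refines the stable refinement of $\gamma$. Apply this with $\beta=\alpha$ and $\gamma=\alg 20G{\cdot}$: we get that $\alpha$, being already \WL2-stable, refines $\algtwostab G{\cdot}$. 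Hence $\alpha(x,y)=\alpha(x',y')$ forces $\algtwostab G{x,y}=\algtwostab G{x',y'}$, which is exactly the claim.

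I expect the only real subtlety to be bookkeeping: the \WL2 algorithm as defined in Section~\ref{s:prel} starts from a \emph{specific} three-class initial coloring of $G$, whereas an association scheme is presented as an abstract edge-colored complete digraph, so I must be careful that "$G$ is a constituent graph" is being used in precisely the sense that $E(G)$ is one (undirected) color class — equivalently a pair of mutually transposed directed classes, or a single symmetric one — of $\alpha$. Once that identification is nailed down, the symmetry condition (second bullet) is not even needed for the argument, and the proof is just the two observations above: $\alpha$ refines the initial coloring of $G$, and refinement preserves the refinement order up to stabilization. I would state the monotonicity lemma explicitly (or cite it as folklore) and then give the two-line deduction.

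Finally I would remark, for the reader's orientation, that this lemma is the bridge needed so that Lemma~\ref{lem:main-local} can be applied to constituent graphs of association schemes in the same way it was applied to strongly regular graphs in the proof of Theorem~\ref{cor:srgs}: whenever $\htw{F_\barz}\le 2$ for a suitable labeling, the local subgraph count $\suub{F_\barz}{G,x,y}$ is determined by $\algtwostab G{x,y}$, which by this lemma is in turn determined by $\alpha(x,y)$.
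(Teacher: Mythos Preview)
Your argument is correct and is essentially the paper's proof repackaged: the paper carries out a direct induction on the round number $i$, showing that $\alpha(x,y)=\alpha(x',y')$ implies $\alg 2iG{x,y}=\alg 2iG{x',y'}$, using in the inductive step exactly the stability of $\alpha$ to match up the multisets in~\refeq{refine-2}. Your formulation via the monotonicity of the refinement operator (if $\beta$ refines $\gamma$ then $R(\beta)$ refines $R(\gamma)$) is just the same induction stated once and for all; the two inputs---that $\alpha$ refines $\alg 20G{\cdot}$ because $G$ is a constituent graph, and that $\alpha$ is a fixed point of $R$---are identical in both proofs.
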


\begin{proof}
  Using the induction on $i$, we will prove that the equality $\alpha(x,y)=\alpha(x',y')$
implies the equality $\alg 2iG{x,y}=\alg 2iG{x',y'}$ for every $i$.
In the base case of $i=0$ this is true by the definition of a constituent graph.
Suppose that the claim is true for some $i$ for all $x,y,x',y'\in V$.

To prove the claim for $i+1$, fix $x,y,x',y'$ such that $\alpha(x,y)=\alpha(x',y')$.
By the induction assumption, $\alg 2iG{x,y}=\alg 2iG{x',y'}$.
For a pair of colors $p\in C^2$, let $T(p)=\Set{z}{(\alpha(x,z),\alpha(z,y))=p}$
and $T'(p)=\Set{z'}{(\alpha(x',z'),\alpha(z',y'))=p}$. 
Since $\alpha$ defines an association scheme, this coloring is not
refinable by \WL2. This means that
\begin{equation}
  \label{eq:Tp}
  |T(p)|=|T'(p)|\text{ for every }p\in C^2.
\end{equation}
By the induction assumption, for every $z\in T(p)$ and $z'\in T'(p)$ we have
$$
(\alg 2iG{x,z},\alg 2iG{z,y})=(\alg 2iG{x',z'},\alg 2iG{z',y'}).
$$
Along with Equality \refeq{Tp}, this implies that
$$
\Mset{(\alg 2iG{x,z},\alg 2iG{z,y})}_{z\in V}=\Mset{(\alg 2iG{x',z'},\alg 2iG{z',y'})}_{z'\in V}
$$
and, therefore, $\alg 2{i+1}G{x,y}=\alg 2{i+1}G{x',y'}$, as desired.
\end{proof}

\begin{theorem}\label{cor:const}
Let $G$ be a constituent graph of an association scheme.
If $3\le s\le7$, then the number of $s$-cycles containing an edge $xy$
in $G$ does not depend on the choice of~$xy$.
\end{theorem}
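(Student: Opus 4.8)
The plan is to recast the count of $s$-cycles through an edge as a local labeled subgraph count and then to feed it through the invariance results of Section~\ref{sec:localWLcons} together with Lemma~\ref{lem:as}. Label the vertices of $C_s$ by $z_1,\dots,z_s$ in cyclic order, so that $z_1z_2$ is an edge of $C_s$. Then the number of $s$-cycles of $G$ containing a prescribed edge $xy$ equals $\suub{C_s,z_1,z_2}{G,x,y}$: an $s$-cycle through $xy$ is precisely a subgraph $S\cong C_s$ of $G$ containing $x,y$ as adjacent vertices, i.e.\ such that some isomorphism $C_s\to S$ maps $z_1z_2$ to $xy$, and conversely every subgraph counted by $\suub{C_s,z_1,z_2}{G,x,y}$ is such a cycle. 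Note also that $\suub{C_s,z_1,z_2}{G,x,y}=\suub{C_s,z_1,z_2}{G,y,x}$, because $(C_s,z_1,z_2)$ has a reflection automorphism that swaps $z_1$ and $z_2$; this will let us disregard the orientation of the edge below.

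Next I would check that $\htw{C_s,z_1,z_2}=2$ for $3\le s\le7$. Since $z_1$ and $z_2$ are adjacent in $C_s$, every homomorphism sends them to adjacent --- hence distinct --- vertices, so for each edge-surjective homomorphic image $(H,h(z_1),h(z_2))$ of $(C_s,z_1,z_2)$ Remark~\ref{rem:htw-rooted} gives $\tw{H,h(z_1),h(z_2)}=\tw H\le\htw{C_s}=2$, the last step by Example~\ref{ex:htw-C7}. (This is the same estimate used for $\htw{C_s,z_1,z_s}$ in the proof of Theorem~\ref{cor:srgs}, and it is exactly where the bound $s\le7$ is needed.) By Lemma~\ref{lem:main-local} the quantity $\suub{C_s,z_1,z_2}{G,x,y}$ is therefore determined by the $\clogic^3$-type of $(G,x,y)$, hence, via Lemma~\ref{lem:CFI}, by the color $\algtwostab G{x,y}$.

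Finally I would combine this with Lemma~\ref{lem:as}. Let $\alpha$ be the coloring defining the scheme and let $c$ be the color whose constituent graph is $G$, so that every edge $uv$ of $G$ satisfies $\alpha(u,v)\in\{c,c^*\}$, where $c^*$ is the transposed color. Given two edges $xy$ and $x'y'$ of $G$, after possibly swapping $x'$ and $y'$ (which, by the first paragraph, does not change the cycle count) we may assume $\alpha(x,y)=\alpha(x',y')$; then Lemma~\ref{lem:as} yields $\algtwostab G{x,y}=\algtwostab G{x',y'}$, and hence $\suub{C_s,z_1,z_2}{G,x,y}=\suub{C_s,z_1,z_2}{G,x',y'}$, which is the claim. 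The whole argument is just a chaining of already-established facts, so I do not expect a serious obstacle; the points that need care are the bookkeeping in the reduction to a labeled subgraph count (in particular the reflection symmetry, which also lets us ignore the difference between $c$ and $c^*$) and making sure the homomorphic-image argument truly gives $\htw{C_s,z_1,z_2}=2$ and not merely $\tw{C_s,z_1,z_2}=2$, i.e.\ that adjacency of the two roots is preserved under every homomorphism.
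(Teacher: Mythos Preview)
Your proposal is correct and follows essentially the same route as the paper: reduce to the labeled subgraph count $\suub{C_s,z_1,z_2}{G,x,y}$, verify $\htw{C_s,z_1,z_2}=2$ via Remark~\ref{rem:htw-rooted} and Example~\ref{ex:htw-C7}, apply Lemma~\ref{lem:main-local} and Lemma~\ref{lem:CFI} to get invariance under $\algtwostab{G}{\cdot}$, and finish with Lemma~\ref{lem:as} plus the reflection automorphism of $C_s$ to handle the possible orientation mismatch. Your unpacking of the $\htw$ step is in fact a bit more explicit than the paper's, which simply cites Remark~\ref{rem:htw-rooted} for the equality $\htw{C_s,z_1,z_s}=\htw{C_s}$ without spelling out that the remark must be applied to each homomorphic image separately.
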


\begin{proof}
As already noted in the proof of Theorem \ref{cor:srgs}, $\htw{C_s,z_1,z_2}=2$,
where two labeled vertices $z_1,z_2$ are consecutive in $C_s$.
Like in the proof of Theorem \ref{cor:srgs}, we conclude that the count
$\suub{C_s,z_1,z_2}{G,x,y}$ is determined by the canonical color $\algtwostab G{x,y}$.
Let $x'y'$ be another edge of $G$. By Lemma \ref{lem:as},
$\algtwostab G{x,y}=\algtwostab G{x',y'}$ or $\algtwostab G{x,y}=\algtwostab G{y',x'}$.
It remains to notice that 
$\suub{C_s,z_1,\allowbreak z_2}{G,x',y'}=\suub{C_s,z_1,z_2}{G,y',x'}$
just because $C_s$ has an automorphism transposing $z_1$ and~$z_2$.
\end{proof}

The optimality of Theorems \ref{cor:srgs} and \ref{cor:const} is certified by Table~\ref{table}.

\begin{table}[t]
\centering
$\begin{array}{|c|c|r|}\hline
F_\bary & G_\barx &\suub{F_\bary}{G_\barx}\\
\hline
(P_8,x_1,x_8) & (S,a,a') & 2500\\
(P_8,x_1,x_8) & (S,b,b') & 2522\\
(C_6,x_1,x_3) & (S,a,a') & 72\\
(C_6,x_1,x_3) & (S,b,b') & 74\\
(C_6,x_1,x_4) & (\allowbreak S,a,a') & 92\\
(C_6,x_1,x_4) & (S,b,b') & 94\\
\hline
(C_8,x_1,x_2) & (\overline {S},a,a') & 48832\\
(C_8,x_1,x_2) & (\overline {S},b,b') & 48788\\
\hline
\end{array}$
\caption{$S$ and $\overline{S}$ denote the Shrikhande graph and
its complement, respectively;  $a$, $a'$, $b$, and $b'$ are the four vertices in $S$
shown in Figure~\ref{fig:shk}.}
\label{table}
\end{table}

Theorem \ref{cor:hom12} below applies to a yet larger class of graphs.
Following \cite{WL-recogn}, we call a graph $G$ \emph{$\wlh 12$-regular} if
$\algtwostab G{x,x}=\algtwostab G{x',x'}$ for every two vertices $x$ and $x'$ of $G$, 
that is, \WL2 determines a monochromatic coloring of $V(G)$.
Applying Lemma \ref{lem:as} in the case $y=x$ and $y'=x'$, we see that 
any constituent graph of an association scheme is $\wlh 12$-regular.
More $\wlh 12$-regular graphs can be obtained by observing that
this graph class is closed under taking graph complements and that,
if $G_1$ and $G_2$ are two $\wlh 12$-regular \WL2-equivalent graphs,
then the disjoint union of $G_1$ and $G_2$ is also $\wlh 12$-regular.
In terms of the theory of coherent configurations \cite{Ponomarenko-book},
a graph $G$ is $\wlh 12$-regular if and only if
the \emph{coherent closure} of $G$ is an association scheme.

\begin{theorem}\label{cor:hom12}\hfill
  \begin{enumerate}[\bf 1.]
  \item 
For each $s\le7$, the number of $s$-paths emanating from a vertex $x$
in a $\wlh 12$-regular graph $G$ is the same for every~$x$.
\item 
For each $3\le s\le7$, the number of $s$-cycles containing a vertex $x$
in a $\wlh 12$-regular graph $G$ is the same for every~$x$.
  \end{enumerate}
\end{theorem}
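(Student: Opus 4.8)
The plan is to rerun the argument behind Theorems~\ref{cor:srgs} and~\ref{cor:const}, this time with \emph{$1$-labeled} pattern graphs. First I would identify the quantities to be controlled with local subgraph counts: the number of $s$-paths emanating from a vertex $x$ of $G$ (those having $x$ as an end vertex) equals $\suub{P_s,z_1}{G,x}$ when $z_1$ is an end vertex of $P_s$, and the number of $s$-cycles through $x$ equals $\suub{C_s,z_1}{G,x}$ for any fixed vertex $z_1$ of $C_s$. Thus it is enough to show that each of these counts is determined by the canonical color $\algtwostab G{x,x}$, since by definition $G$ is $\wlh12$-regular exactly when $\algtwostab G{x,x}=\algtwostab G{x',x'}$ holds for all vertices $x,x'$.

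The crucial step is the treewidth bookkeeping. I would begin by recording the $1$-labeled analogue of Remark~\ref{rem:htw-rooted}: for \emph{every} $1$-labeled graph $F_{z_1}$ one has $\htw{F_{z_1}}=\htw F$. Indeed, an edge-surjective homomorphic image of $F_{z_1}$ is of the form $(H,h(z_1))$ with $H$ an edge-surjective homomorphic image of $F$, and any tree decomposition of $H$ is already a tree decomposition of $(H,h(z_1))$ because $h(z_1)\in V(H)$ lies in some bag; hence $\tw{(H,h(z_1))}=\tw H$, and taking the maximum over all such $H$ gives the claim. This is precisely the point where a single label is harmless, in contrast to the $2$-labeled pattern $(P_6,z_2,z_5)$ of Figure~\ref{fig:p6homimg}b: one label can never force two vertices into a common bag, so the treewidth does not go up. Combined with Example~\ref{ex:htw-C7}, which supplies $\htw{P_s}=2$ for $s\le7$ and $\htw{C_s}=2$ for $3\le s\le7$, this gives $\htw{P_s,z_1}=2$ and $\htw{C_s,z_1}=2$ in the respective ranges.

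Now the conclusion is assembled as in the earlier proofs. With $k=2$, Lemma~\ref{lem:main-local} shows that $\suub{P_s,z_1}{G,x}$ (for $s\le7$) and $\suub{C_s,z_1}{G,x}$ (for $3\le s\le7$) are determined by the $\clogic^{3}$-type of $(G,x)$. By Lemma~\ref{lem:CFI} with $k=2$ and $s=1$, that type is in turn determined by $\algtwostab G{x}$, which equals $\algtwostab G{x,x}$ under the cloning convention. Finally, $\wlh12$-regularity of $G$ makes $\algtwostab G{x,x}$ independent of $x$, so both counts are the same for every vertex $x$, establishing parts~1 and~2.

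I do not anticipate a genuine obstacle: the only delicate ingredient is the claim that rooting $P_s$ and $C_s$ at a single vertex keeps the homomorphism-hereditary treewidth at $2$ for $s\le7$, and this is delivered by Example~\ref{ex:htw-C7} together with the $1$-labeled refinement of Remark~\ref{rem:htw-rooted} above; everything else is a direct invocation of Lemmas~\ref{lem:main-local} and~\ref{lem:CFI} and of the definition of $\wlh12$-regularity. (The statement asserts no optimality, but should one want it, the bound $s\le7$ can be shown best possible by exhibiting a $\wlh12$-regular graph on which the number of $8$-cycles through a vertex is not constant, e.g.\ a disjoint union of two non-isomorphic \WL2-equivalent strongly regular graphs.)
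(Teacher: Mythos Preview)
Your proposal is correct and follows essentially the same route as the paper: you combine Remark~\ref{rem:htw-rooted} (extended from $\tw$ to $\htw$ for $1$-labeled graphs, exactly as the paper implicitly does) with Example~\ref{ex:htw-C7} to get $\htw{P_s,z_1}\le2$ and $\htw{C_s,z_1}=2$, then invoke Lemmas~\ref{lem:main-local} and~\ref{lem:CFI} and the definition of $\wlh 12$-regularity. The only difference is cosmetic---you spell out the passage from $\tw$ to $\htw$ more explicitly---and your parenthetical optimality example differs from the paper's (which uses a single rigid strongly regular graph on $25$ vertices rather than a disjoint union), but optimality is not part of the stated theorem.
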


\begin{proof}
Taking into account Remark \ref{rem:htw-rooted} and Example \ref{ex:htw-C7}, we obtain
$\htw{P_s,z_1}=\htw{P_s}\le2$ and $\htw{C_s,z_1}=\htw{C_s}=2$.
Like in the proofs of Theorems \ref{cor:srgs} and \ref{cor:const}, we conclude that the counts
$\suub{P_s,z_1}{G,x}$ and $\suub{C_s,z_1}{G,x}$ are determined by the canonical color $\algtwostab G{x,x}$.
By the definition of $\wlh 12$-regularity, this color
is the same for all vertices of~$G$.
\end{proof}

Theorem~\ref{cor:hom12} is optimal regarding the restriction $s\le7$.
Indeed, consider the strongly regular graphs with parameters $(25,12,5,6)$ without
nontrivial automorphisms.  These are two complementary graphs.
Specifically, we pick $H=P_{25.02}$ (with graph6 code
\verb|X}rU\adeSetTjKWNJEYNR]PL| \verb|jPBgUGVTkK^YKbipMcxbk`{DlXF|)
in the graph database \cite{BrouwerPaulus}, which is based in this part on \cite{Paulus73}.
The graph $H$ has a vertex $x$ with $\sub{P_8,z_1}{H,x}=11115444$
and $\sub{C_8,z_1}{H,x}=5201448$ and a vertex $y$ with
$\sub{P_8,z_1}{H,x}=11115510$ and $\sub{C_8,z_1}{H,x}=5201580$.


\end{document}